\newtheorem{theorem}{Theorem}
\newenvironment{proof}[1][Proof]{\noindent\textbf{#1.} }{\ \rule{0.5em}{0.5em}}
\begin{document}

\title{Proofs of two Theorems concerning Sparse Spacetime Constraints}
\author{Christian Schulz\\MPI Informatik
\and Christoph von Tycowicz\\Freie Universit\"at Berlin
\and Hans-Peter Seidel\\MPI Informatik
\and Klaus Hildebrandt\\MPI Informatik }
\maketitle

\begin{abstract}
In the SIGGRAPH 2014 paper~\cite{Schulz2014} an approach for animating
deformable objects using sparse spacetime constraints is introduced. This
report contains the proofs of two theorems presented in the paper.

\end{abstract}


\section{Introduction}

In this report, we provide proofs of two theorems stated in \cite{Schulz2014}.
In Sections 2 and 4, we briefly review the background of the theorems and
introduce some notation. Sections 3 and 5 contain the proofs. For more
background on animating deformable objects using spacetime optimization, we
refer to
\cite{Witkin1988,Kass2008,Barbivc2009,Huang2011,Hildebrandt2012a,Barbic2012}.

\section{Sparse Constraints and Linear Dynamics}

We consider a linearized finite elements discretization of an elastic solid.
The dynamics of the solid are described by a coupled system of linear ordinary
second-order differential equations of the form%
\begin{equation}
M\,\ddot{u}(t)+(\alpha\,M+\beta\,K)\dot{u}%
(t)+K\,u(t)+g=0,\label{eq.linDynamics}%
\end{equation}
where $u\in%
\mathbb{R}
^{n}$ is the displacement vector, $M$ is the mass matrix, $K$ is the stiffness
matrix, $\alpha\,M+\beta\,K$ a Rayleigh damping term, and $g$ a constant
vector. We use spacetime constraints to force the object to interpolate a set
of keyframes. We will first look at the following simple set of keyframes. For
a set of $m+1$ nodes $\{t_{0},t_{1},\ldots,t_{m}\}$, we specify interpolation
constraints%
\begin{equation}
u(t_{i})=u_{i}\label{eq.keyframes}%
\end{equation}
and additionally the constraints
\begin{equation}
\dot{u}(t_{0})=v_{0}\qquad\text{and}\qquad\dot{u}(t_{m})=v_{m}%
\label{eq.boundaryConstraints}%
\end{equation}
on the velocity at the boundary of the time interval. To satisfy these
constraints, we need to inject an additional force to the system. This force
is determined in a optimization problem, where the objective functional is
\begin{equation}
E(u)=\frac{1}{2}%
{\displaystyle\int\limits_{t_{0}}^{t_{m}}}
\left\Vert M\,\ddot{u}+(\alpha\,M+\beta\,K)\dot{u}+K\,u+g\right\Vert _{M^{-1}%
}^{2}\,\text{d}t.\label{eq.spacetimeProblemLinear}%
\end{equation}
For some arbitrary $u$ the $E(u)$ measures the squared $L^{2}$-norm of the
additional force.

The eigenvalues and eigenmodes of~(\ref{eq.linDynamics}) are solutions to the
equation
\[
K\,\phi_{i}=\lambda_{i}M\,\phi_{i}.
\]
We consider a basis $\{\phi_{1},\phi_{2},...,\phi_{n}\}$ of $%
\mathbb{R}
^{n}$ consisting of eigenmodes. In \cite{Hildebrandt2012a} it was shown that
the minimizer $u$ of $E$ over all $\tilde{u}\in H^{2}([t_{0},t_{m}],%
\mathbb{R}
^{n})$ subject to the constraints (\ref{eq.keyframes}) and
(\ref{eq.boundaryConstraints}) are of the form
\begin{equation}
u(t)=\sum_{i}\omega_{i}(t)\phi_{i},\label{eq.wigglyMultivariate}%
\end{equation}
where the $\omega_{i}(t)$ are so-called \textit{wiggly splines}
\cite{Kass2008}. The wiggly splines are solutions to the one-dimensional form
of the spacetime optimization problem described above.

\subsection{Sparse spacetime constraints}

Instead of the interpolation constraints (\ref{eq.keyframes}) and
(\ref{eq.boundaryConstraints}), in \cite{Schulz2014} linear constraints of the
form
\[
A_{k}u(t_{k})=a_{k}\qquad\text{and}\qquad B_{k}\dot{u}(t_{k})-b_{k}%
\]
are considered. Here $A_{k},B_{k}$ are rectangular matrices and $a_{k},b_{k}$
are vectors. The constraints are sparse in the sense that the number of
constraints at each node $t_{k}$ is less than $n$. For example, only the
positions of a part of the object are prescribed.

Since the computation is, for efficiency, performed in a low-dimensional
subspace of $%
\mathbb{R}
^{n}$, the constraints are formulated as least squares constraints
\begin{equation}
E_{C}(u)=\frac{1}{2}\sum_{k=0}^{m}\left(  c_{A}\left\Vert A_{k}u(t_{k}%
)-a_{k}\right\Vert ^{2}+c_{B}\left\Vert B_{k}\dot{u}(t_{k})-b_{k}\right\Vert
^{2}\right)  , \label{eq.energyLS}%
\end{equation}
where $c_{A}\ $and $c_{B}$ are constants.

\section{The First Theorem}

The first theorem, \cite[Theorem 1]{Schulz2014}, shows that the minimizers of
the spacetime optimization problem with sparse (least squares) constraints can
be described using the eigenmodes and the wiggly splines.

\begin{theorem}
\label{thm.wiggliesLS}The minimizers of the energy $E(u)+E_{C}(u)$ among all
functions in the Sobolev space $H^{2}((t_{0},t_{m}),%
\mathbb{R}
^{n})$ are of the form~(\ref{eq.wigglyMultivariate}) and are twice
differentiable at any node $t_{k}$ where no velocity is prescribed and once
differentiable at all other nodes.
\end{theorem}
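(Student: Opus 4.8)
The plan is to decouple the multivariate problem into $n$ one-dimensional ones, to which the wiggly-spline theory of \cite{Kass2008} applies, and then to read the nodal regularity off the natural transmission conditions of the variational problem. First I would pass to modal coordinates: take the eigenmodes $M$-orthonormal, so $\phi_i^{\top}M\phi_j=\delta_{ij}$ and $\phi_i^{\top}K\phi_j=\lambda_i\delta_{ij}$, and write $u(t)=\sum_i z_i(t)\,\phi_i$, which identifies $H^2((t_0,t_m),\mathbb{R}^n)$ with $n$-tuples $(z_i)$, $z_i\in H^2((t_0,t_m),\mathbb{R})$. Expanding $g=\sum_i g_i\,M\phi_i$ and using $(M\phi_i)^{\top}M^{-1}(M\phi_j)=\delta_{ij}$, the residual in (\ref{eq.spacetimeProblemLinear}) decouples mode by mode, giving
\[
E(u)=\sum_i E_i(z_i),\qquad E_i(z)=\tfrac12\int_{t_0}^{t_m}\!\bigl(L_i z+g_i\bigr)^2\,\mathrm{d}t,\qquad L_i z:=\ddot z+d_i\,\dot z+\lambda_i z,\ \ d_i:=\alpha+\beta\lambda_i,
\]
which are exactly the one-dimensional spacetime functionals whose minimizers are the wiggly splines. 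Substituting the same expansion into (\ref{eq.energyLS}) shows that $E_C$ depends on $u$ \emph{only} through the finitely many nodal quantities $z_i(t_k)$ and $\dot z_i(t_k)$, although it couples the modes through $A_k$ and $B_k$. This structure -- $E$ decoupled and local in time, $E_C$ nonlocal in the modes but supported on the nodes -- is the key.

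Let $u^{\star}$ be a minimizer (one exists by the direct method under a mild rank condition on the constraints, which makes $E+E_C$ coercive on the finite-dimensional space of minimizers of $E$; otherwise one argues with an arbitrary minimizer). I would first show that the restriction of $u^{\star}$ to each subinterval $[t_k,t_{k+1}]$ minimizes $E$ over that subinterval subject to its own position and velocity at $t_k$ and $t_{k+1}$: any competitor with those boundary data glues to $u^{\star}$ outside $[t_k,t_{k+1}]$ into a function that is again in $H^2$ (matching first derivatives prevents a Dirac mass in the second derivative) and that leaves all nodal data -- hence $E_C$, and the other subintervals' contributions to $E$ -- untouched, so minimality is inherited on the subinterval. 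Because $E=\sum_i E_i$ and the clamped boundary conditions both decouple across modes, each $z_i^{\star}|_{[t_k,t_{k+1}]}$ minimizes $E_i$ with clamped data; standard regularity for this one-dimensional problem shows it is $C^{\infty}$ on $(t_k,t_{k+1})$ and solves the fourth-order Euler--Lagrange equation $L_i^{\ast}(L_i z+g_i)=0$ with $L_i^{\ast}w:=\ddot w-d_i\dot w+\lambda_i w$, the defining equation of a wiggly-spline segment. Summing over $i$ gives the asserted form (\ref{eq.wigglyMultivariate}).

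For the nodal regularity, note that $H^2$ of an interval embeds in $C^1$, so each $z_i^{\star}$, hence $u^{\star}$, is already once differentiable at every node, with $z_i^{\star}$ and $\dot z_i^{\star}$ continuous across each $t_k$. To get more, test $E+E_C$ with a single-mode variation $v=w\,\phi_i$, $w\in H^2$ supported near an interior node $t_k$. Two integrations by parts on the two adjacent subintervals cancel the bulk against the Euler--Lagrange equation and leave the natural transmission conditions at $t_k$; the coefficient of the free quantity $\dot w(t_k)$ gives
\[
\bigl[\,\ddot z_i^{\star}\,\bigr]_{t_k}=\bigl[\,L_i z_i^{\star}+g_i\,\bigr]_{t_k}=c_B\,(B_k\dot u^{\star}(t_k)-b_k)^{\top}B_k\phi_i ,
\]
where $[\,\cdot\,]_{t_k}$ denotes the jump across $t_k$ and the first equality uses the continuity of $z_i^{\star}$ and $\dot z_i^{\star}$. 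Multiplying by $\phi_i$, summing over $i$ and using $\sum_i\phi_i\phi_i^{\top}=M^{-1}$ gives the compact form $[\,\ddot u^{\star}\,]_{t_k}=c_B\,M^{-1}B_k^{\top}(B_k\dot u^{\star}(t_k)-b_k)$. If no velocity is prescribed at $t_k$, i.e.\ the term $c_B\lVert B_k\dot u(t_k)-b_k\rVert^2$ is absent (equivalently $B_k=0$ there), the right-hand side vanishes, so $u^{\star}$ is twice differentiable at $t_k$. The analogous computation at $t_0$ and $t_m$ yields one-sided conditions, which are not needed for the statement.

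The main obstacle I anticipate is bookkeeping rather than depth. One has to justify the ``freeze the nodal data'' reduction carefully (the glued function must stay in $H^2$, and $E_C$ must genuinely see only nodal data), perform the double integration by parts cleanly while tracking exactly which variations $v=w\phi_i$ are admissible (globally $H^2$, so $w\in C^1$ with $w(t_k)$ and $\dot w(t_k)$ free but no prescribed jumps), and state ``no velocity prescribed'' precisely enough that the vanishing above is unambiguous. Existence and coerciveness of a minimizer is a secondary issue: it requires the constraints to control the finitely many directions on which the $E_i$ degenerate, but the structural statement holds for any minimizer regardless.
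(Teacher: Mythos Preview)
Your proposal is correct and follows essentially the same approach as the paper: pass to the eigenbasis so that $E$ decouples, observe that $E_C$ depends only on nodal data, derive the wiggly-spline fourth-order ODE on each subinterval, invoke $H^2\hookrightarrow C^1$ for once-differentiability, and read off the jump of $\ddot u$ at interior nodes from the coefficient of $\dot v(t_k)$ in the first variation. The one organizational difference is that you obtain the interior Euler--Lagrange equation by a comparison argument (freeze the nodal data and minimize $E$ on each subinterval with clamped boundary values), whereas the paper computes $\delta_v E$ directly on $[t_0,t_m]$, integrates by parts twice on each subinterval, and argues that the bulk integral must vanish for all $v$ because $\delta_v E_C$ sees only the nodes; these two routes are standard and equivalent.
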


\begin{proof}
Assume that $u$ is a minimizer of $\mathcal{E}=E+E_{C}$ and that $v\in
H^{2}((t_{0},t_{m}),%
\mathbb{R}
^{n})$. The variation $\delta_{v}\mathcal{E}(u)$ of $\mathcal{E}$ at $u$ in
the direction of $v$ satisfies%
\[
\delta_{v}\mathcal{E}(u)=\delta_{v}E(u)+\delta_{v}E_{C}(u).
\]
The variation $\delta_{v}E_{C}(u)$ is
\begin{align*}
\delta_{v}E_{C}(u) &  =\underset{h\rightarrow0}{\lim}\frac{1}{h}\left(
E_{C}(u+hv)-E_{C}(u)\right)  \\
&  =\underset{h\rightarrow0}{\lim}\frac{1}{2h}\left(  \sum_{k=0}^{m}\left(
c_{A}\left\Vert A_{k}(u(t_{k})+hv(t_{k}))-a_{k}\right\Vert ^{2}+c_{B}%
\left\Vert B_{k}\left(  \dot{u}(t_{k})+h\dot{v}(t_{k})\right)  -b_{k}%
\right\Vert ^{2}\right)  -E_{C}(u)\right)  \\
&  =\sum_{k=0}^{m}\left(  \left(  A_{k}v(t_{k})\right)  ^{T}\left(
A_{k}u(t_{k})-a_{k}\right)  +\left(  B_{k}\dot{v}(t_{k})\right)  ^{T}\left(
B_{k}\dot{u}(t_{k})-b_{k}\right)  \right)  .
\end{align*}
Next, we consider the energy $E$ and abbreviate $D=\alpha\,M+\beta\,K$. The
variation $\delta_{v}E(u)$ is given by%
\begin{align*}
\delta_{v}E(u) &  =\underset{h\rightarrow0}{\lim}\frac{1}{h}\left(
E(u+hv)-E(u)\right)  \\
&  =\underset{h\rightarrow0}{\lim}\frac{1}{2h}\left(
{\displaystyle\int\limits_{t_{0}}^{t_{m}}}
\left\Vert M\,\left(  \ddot{u}+h\ddot{v}\right)  +D\left(  \dot{u}+h\dot
{v}\right)  +K\left(  u+hv\right)  +g\right\Vert _{M^{-1}}^{2}%
\,dt-E(u)\right)  \\
&  =%
{\displaystyle\int\limits_{t_{0}}^{t_{m}}}
\left(  \ddot{v}^{T}+\dot{v}^{T}DM^{-1}+v^{T}KM^{-1}\right)  \left(  M\ddot
{u}+D\dot{u}+Ku+g\right)  \,dt\\
&  =%
{\displaystyle\int\limits_{t_{0}}^{t_{m}}}
v^{T}\left(  M\ddddot{u}+\left(  2K-DM^{-1}D\right)  \ddot{u}+KM^{-1}\left(
Ku+g\right)  \right)  \,dt\\
&  -\sum_{k=1}^{m}\left(  \dot{v}^{T}+v^{T}DM^{-1}\right)  \left(  M\ddot
{u}+D\dot{u}+Ku+g\right)  {\Huge |}_{t_{k-1}}^{t_{k}}+\sum_{k=1}^{m}%
v^{T}\left(  M\dddot{u}+D\ddot{u}+K\dot{u}\right)  {\Huge |}_{t_{k-1}}^{t_{k}%
}.
\end{align*}
In the last step, we decomposed the integral over $[t_{0},t_{m}]$ into a sum
of integrals over the intervals $[t_{k},t_{k+1}]$ and used integration by
parts twice for each of the summands. We write $u,v,$ and $g$ in the
eigenbasis $\{\phi_{1},\phi_{2},...,\phi_{n}\}$
\[
u(t)=\sum_{i}\omega_{i}(t)\phi_{i},\quad\quad v(t)=\sum_{i}v_{i}(t)\phi
_{i},\quad\quad g=\sum_{i}g_{i}\phi_{i}%
\]
to obtain%
\begin{align}
\delta_{v}E(u) &  =%
{\displaystyle\int\limits_{t_{0}}^{t_{m}}}
\sum_{i}v_{i}\left(  \ddddot{\omega}_{i}+2\left(  \lambda_{i}-2\delta_{i}%
^{2}\right)  \ddot{\omega}_{i}+\lambda_{i}\left(  \lambda_{i}\omega_{i}%
+g_{i}\right)  \right)  \,dt\nonumber\\
&  +\sum_{k=1}^{m}\sum_{i}\left(  v_{i}\left(  \dddot{\omega}_{i}+2\delta
_{i}\ddot{\omega}_{i}+\lambda_{i}\dot{\omega}_{i}\right)  -\left(  \dot{v}%
_{i}+2\delta_{i}v_{i}\right)  \left(  \ddot{\omega}_{i}+2\delta_{i}\dot
{\omega}_{i}+\lambda_{i}\omega_{i}+g_{i}\right)  \right)  {\Huge |}_{t_{k-1}%
}^{t_{k}}.\label{eq.varE2}%
\end{align}
The variation $\delta_{v}\mathcal{E}(u)$ vanishes for any $v$ because $u$ is a
minimizer of $\mathcal{E}$. From the calculation of $\delta_{v}E_{C}(u)$ we
see that $\delta_{v}E_{C}(u)$ depends only on the values of $u,\dot{u},v,$ and
$\dot{v}$ at the nodes $t_{k}$ (and is independent of the values $u,\dot
{u},v,$ and $\dot{v}$ take at any $t$ in one of the open intervals
$(t_{k},t_{k+1})$). Then the integrals
\[%
{\displaystyle\int\limits_{t_{0}}^{t_{m}}}
v_{i}\left(  \ddddot{\omega}_{i}+2\left(  \lambda_{i}-2\delta_{i}^{2}\right)
\ddot{\omega}_{i}+\lambda_{i}\left(  \lambda_{i}\omega_{i}+g_{i}\right)
\right)  \,dt
\]
must vanish for all $v_{i}\in H^{2}((t_{0},t_{m}),%
\mathbb{R}
)$. This implies
\[
\ddddot{\omega}_{i}+2\left(  \lambda_{i}-2\delta_{i}^{2}\right)  \ddot{\omega
}_{i}+\lambda_{i}\left(  \lambda_{i}\omega_{i}+g_{i}\right)  =0.
\]
The last equation is exactly the characterization of the wiggly splines, see
\cite[Equation (4)]{Schulz2014}. This shows that $u$ is of the form
(\ref{eq.wigglyMultivariate}).

The function $u$ is once differentiable at the nodes $t_{k}$ because any
function in $H^{2}((t_{0},t_{m}),%
\mathbb{R}
)$ is (by the Sobolev's embedding theorem) once continuously differentiable.
Now what remains is to show that $u$ is twice differentiable at nodes where no
velocity is specified. For this, we reorder the terms of (\ref{eq.varE2}):%
\begin{align*}
\delta_{v}E(u) &  =\sum_{k=1}^{m-1}\sum_{i}v_{i}(t_{k})\left(  \dddot{\omega
}_{i}(\underrightarrow{t_{k}})-\dddot{\omega}_{i}(\underleftarrow{t_{k}%
})-2\delta_{i}\left(  \ddot{\omega}_{i}(\underrightarrow{t_{k}})-\ddot{\omega
}_{i}(\underleftarrow{t_{k}})\right)  \right)  -\dot{v}_{i}(t_{k})\left(
\ddot{\omega}_{i}(\underrightarrow{t_{k}})-\ddot{\omega}_{i}(\underleftarrow
{t_{k}})\right)  \\
&  +\sum_{i}\left(  v_{i}\left(  \dddot{\omega}_{i}+2\delta_{i}\ddot{\omega
}_{i}+\lambda_{i}\dot{\omega}_{i}\right)  -\left(  \dot{v}_{i}+v_{i}%
2\delta_{i}\right)  \left(  \ddot{\omega}_{i}+2\delta_{i}\dot{\omega}%
_{i}+\lambda_{i}\omega_{i}+g_{i}\right)  \right)  {\Huge |}_{t_{0}}^{t_{m}}.
\end{align*}
Here $\ddot{\omega}_{i}(\underrightarrow{t_{k}})$ denotes the second
derivative at $t_{k}$ of the restriction of $\ddot{\omega}_{i}$ to the
interval $[t_{k-1},t_{k}]$, and $\ddot{\omega}_{i}(\underleftarrow{t_{k}})$
denotes the second derivative at $t_{k}$ of the restriction of $\ddot{\omega
}_{i}$ to the interval $[t_{k},t_{k+1}]$. If no velocity is prescribed at the
node $t_{k}$, then $\dot{v}_{i}(t_{k})\left(  \ddot{\omega}_{i}%
(\underrightarrow{t_{k}})-\ddot{\omega}_{i}(\underleftarrow{t_{k}})\right)  $
has to vanish for all $\dot{v}_{i}$. This implies $\ddot{\omega}%
_{i}(\underrightarrow{t_{k}})=\ddot{\omega}_{i}(\underleftarrow{t_{k}})$ for
all $i$. Hence, $u$ is twice differentiable at $t_{k}$.
\end{proof}

\section{Sparse Constraints and Warping}

\textit{Rotation strain warping} was introduced in \cite{Huang2011}. The goal
there was to remove linearization artifacts from the deformation describe by
the displacement$~u$. The warp map $W$ is a nonlinear map on the space of all
possible displacements~$u$. To integrate the warping into the spacetime
optimization framework described above the least squares energy
(\ref{eq.energyLS}) is replaced by the nonlinear least squares energy%
\[
E_{WC}(u)=\frac{1}{2}\sum_{k=0}^{m}\left(  c_{A}\left\Vert A_{k}%
W(u(t_{k}))-a_{k}\right\Vert ^{2}+c_{B}\left\Vert B_{k}\text{D}W\,\dot
{u}(t_{k})-b_{k}\right\Vert ^{2}\right)  .
\]
Then, the objective functional
\begin{equation}
E(u)+E_{WC}(u)\label{eq.spacetimeOptiProbWarped}%
\end{equation}
is minimized over the space of displacements. The resulting motion is then
warped minimizer $W(u(t))$.

\section{The Second Theorem}

The second theorem in \cite{Schulz2014} shows that the minimizers of the
nonlinear optimization problem can still be described using the eigenmodes and
the wiggly splines.

\begin{theorem}
\label{thm.wiggliesLSWarped}The minimizers of the energy $E(u)+E_{WC}(u)$
among all functions in the Sobolev space $H^{2}((t_{0},t_{m}),%
\mathbb{R}
^{n})$ are of the form~(\ref{eq.wigglyMultivariate}) and are twice
differentiable at any node $t_{k}$ where no velocity is prescribed and once
differentiable at all other nodes.
\end{theorem}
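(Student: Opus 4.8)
The plan is to mimic the proof of Theorem~\ref{thm.wiggliesLS} almost verbatim, since the structure of the argument depends only on two features of the constraint energy: that it is a sum of terms evaluated \emph{at the nodes} $t_k$, and that its variation in a direction $v$ depends only on the values of $v$ and $\dot v$ at the nodes. First I would compute $\delta_v E_{WC}(u)$ by the same limiting procedure as before; using the chain rule, the term $\left\Vert A_k W(u(t_k))-a_k\right\Vert^2$ contributes $\bigl(A_k \mathrm{D}W\big|_{u(t_k)} v(t_k)\bigr)^T\bigl(A_k W(u(t_k))-a_k\bigr)$, and the velocity term contributes an analogous expression linear in $v(t_k)$ and $\dot v(t_k)$ (the second-derivative-of-$W$ piece also produces only a term in $v(t_k)$). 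The precise coefficients are irrelevant; what matters is that $\delta_v E_{WC}(u)$ is again a finite sum over $k$ of quantities built solely from $v(t_k)$ and $\dot v(t_k)$, and in particular is independent of the values $v$ takes on the open intervals $(t_k,t_{k+1})$.

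With that observation in hand, the rest goes through unchanged. Since $u$ minimizes $\mathcal E = E + E_{WC}$, the first variation $\delta_v E(u) + \delta_v E_{WC}(u)$ vanishes for every $v \in H^2((t_0,t_m),\mathbb R^n)$. Expanding $\delta_v E(u)$ in the eigenbasis exactly as in~(\ref{eq.varE2}), and restricting to variations $v$ supported in a single open interval $(t_k,t_{k+1})$ so that $\delta_v E_{WC}(u)=0$ and all boundary terms drop out, we conclude that the interior (bulk) integral
\[
\int\limits_{t_0}^{t_m} v_i\left(\ddddot{\omega}_i + 2(\lambda_i - 2\delta_i^2)\ddot{\omega}_i + \lambda_i(\lambda_i \omega_i + g_i)\right)\,dt
\]
must vanish for all such test functions $v_i$, forcing $\ddddot{\omega}_i + 2(\lambda_i - 2\delta_i^2)\ddot{\omega}_i + \lambda_i(\lambda_i\omega_i + g_i) = 0$ on each subinterval. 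This is precisely the wiggly-spline ODE, so $u$ has the form~(\ref{eq.wigglyMultivariate}). Once-differentiability at every node is automatic from the Sobolev embedding $H^2 \hookrightarrow C^1$.

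For the regularity improvement at nodes without a prescribed velocity, I would again reorganize the boundary terms of $\delta_v E(u)$ into jump contributions at the interior nodes $t_1,\dots,t_{m-1}$ plus a leftover at $t_0,t_m$, exactly as in the first proof. The key point is that, because $\delta_v E_{WC}(u)$ contains \emph{no} term involving $\dot v_i(t_k)$ when no velocity constraint is imposed at $t_k$ (the warped velocity term $\left\Vert B_k \mathrm{D}W\,\dot u(t_k) - b_k\right\Vert^2$ is simply absent there), the coefficient of $\dot v_i(t_k)$ in the full first variation is just $-\bigl(\ddot\omega_i(\underrightarrow{t_k}) - \ddot\omega_i(\underleftarrow{t_k})\bigr)$; since this must vanish for arbitrary $\dot v_i(t_k)$, we get $\ddot\omega_i(\underrightarrow{t_k}) = \ddot\omega_i(\underleftarrow{t_k})$ and hence $u$ is twice differentiable there. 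The only place warranting genuine care — the ``hard part,'' such as it is — is verifying that differentiating $W\circ u$ and $\mathrm{D}W\,\dot u$ at $t_k$ really does produce a contribution that is (a) finite and (b) linear in the nodal values $v(t_k), \dot v(t_k)$ with no hidden dependence on $v$ over the open intervals; this needs only that $W$ is $C^2$ (or $C^1$ with the relevant one-sided derivatives) in a neighbourhood of $u(t_k)$, which is part of the setup of rotation–strain warping, so no essentially new estimate is required.
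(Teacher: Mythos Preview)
Your proposal is correct and follows the same route as the paper's own (sketch) proof: compute $\delta_v E_{WC}(u)$, observe that it depends only on the nodal values $v(t_k),\dot v(t_k)$, and then invoke the computation of $\delta_v E(u)$ from Theorem~\ref{thm.wiggliesLS} verbatim to obtain both the wiggly-spline ODE and the $C^2$-regularity at nodes without velocity constraints. If anything you are slightly more careful than the paper, since you correctly note that differentiating $\mathrm{D}W|_{u(t_k)}\dot u(t_k)$ produces, via the chain rule, an extra $D^2W$-term that is linear in $v(t_k)$ (not just $\dot v(t_k)$); the paper's sketch suppresses this, but it is harmless for the argument because that extra term still lives only at the nodes.
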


\begin{proof}
[Proof (Sketch)]The proof is similar to that of Theorem \ref{thm.wiggliesLS}.
So we only sketch the proof here. We first calculate the variation $\delta
_{v}E_{WC}(u)$%
\begin{align}
\delta_{v}E_{WC}(u) &  =\underset{h\rightarrow0}{\lim}\frac{1}{h}\left(
E_{WC}(u+hv)-E_{WC}(u)\right)  \nonumber\\
&  =\underset{h\rightarrow0}{\lim}\frac{1}{2h}(\sum_{k=0}^{m}(c_{A}\left\Vert
A_{k}W(u(t_{k})+hv(t_{k}))-a_{k}\right\Vert ^{2}\nonumber\\
&  +c_{B}\left\Vert B_{k}DW\left(  \dot{u}(t_{k})+h\dot{v}(t_{k})\right)
-b_{k}\right\Vert ^{2})-E_{WC}(u))\label{eq.VariEWC}\\
&  =\sum_{k=0}^{m}(\left(  A_{k}DW(v(t_{k}))\right)  ^{T}\left(
A_{k}W(u(t_{k}))-a_{k}\right)  \nonumber\\
&  +\left(  B_{k}D^{2}W(\dot{v}(t_{k}))\right)  ^{T}\left(  B_{k}DW(\dot
{u}(t_{k}))-b_{k}\right)  ).\nonumber
\end{align}
The last step used the Taylor expansion
\[
W(u(t_{k})+hv(t_{k}))=W(u(t_{k}))+hDW(v(t_{k}))+\mathcal{R}(h)
\]
and
\[
DW\left(  \dot{u}(t_{k})+h\dot{v}(t_{k})\right)  =DW\left(  \dot{u}%
(t_{k})\right)  +hD^{2}W\left(  \dot{v}(t_{k})\right)  +\mathcal{R}(h),
\]
where $\mathcal{R}(h)$ is a remainder term for which $\underset{h\rightarrow
0}{\lim}\frac{1}{h}\mathcal{R}(h)=0$.

The rest is as in the proof of Theorem \ref{thm.wiggliesLS}. We calculate the
variation $\delta_{v}E(u)$ of $E$ and represent $u$ and $v$ in the modal
basis. This yields (\ref{eq.varE2}). From (\ref{eq.VariEWC}) we see that the
variation $\delta_{v}E_{WC}(u)$ depends only on the values of $u,\dot{u},v,$
and $\dot{v}$ at the nodes $t_{k}$ (and is independent of the values
$u,\dot{u},v,$ and $\dot{v}$ take at any $t$ in one of the open intervals
$(t_{k},t_{k+1})$). As described in the proof of Theorem \ref{thm.wiggliesLS},
the minimizers are of the form (\ref{eq.wigglyMultivariate}) and they are
twice differentiable at any node $t_{k},$ where no velocity is prescribed, and
once differentiable at all other nodes.
\end{proof}

\bibliographystyle{alpha}
\bibliography{partial}

\end{document}